\documentclass[twoside,leqno,twocolumn]{article}
%
\usepackage[letterpaper]{geometry}
\usepackage{siamproceedings}
\usepackage[T1]{fontenc}
\usepackage{amsfonts}
\usepackage{epstopdf}
\usepackage{enumitem}
\ifpdf
  \DeclareGraphicsExtensions{.eps,.pdf,.png,.jpg}
\else
  \DeclareGraphicsExtensions{.eps}
\fi
\usepackage{amssymb}
\usepackage{hyperref}
\usepackage[dvipsnames]{pstricks}
\usepackage{pstricks-add}
\usepackage{epsfig}
\usepackage{pst-grad} 
\usepackage{pst-plot} 
\usepackage[space]{grffile} 
\usepackage{etoolbox} 
\usepackage{lipsum} 
\usepackage{mathtools}
\usepackage{algpseudocode}
\usepackage{listings}
\usepackage{algorithm}
\usepackage{graphicx}
\usepackage{tikz}
\usetikzlibrary{shapes,arrows,automata,decorations.pathreplacing,angles,quotes}
\usepackage{bm}
\usepackage{float}
\usepackage{diagbox}
\usepackage[pdf]{graphviz}
\usepackage{siunitx}
\usepackage{placeins}
\usepackage{dblfloatfix}
\usepackage{caption}
%

%
\newsiamremark{remark}{Remark}
\newsiamremark{hypothesis}{Hypothesis}
\crefname{hypothesis}{Hypothesis}{Hypotheses}
\newsiamthm{claim}{Claim}
\usepackage{amsopn}

%
%
\usepackage{pgf}

\makeatletter 
\patchcmd\Gread@eps{\@inputcheck#1 }{\@inputcheck"#1"\relax}{}{}
\makeatother
\newcommand{\R}{\mathbb{R}}

\newcommand{\Z}{\bm{z}}

\newcommand{\fma}{{\tt fma}}
\newcommand{\fmas}{{\tt fmas}}
%

%
\usepackage{xpatch}
\makeatletter
\newcommand*{\addFileDependency}[1]{
  \typeout{(#1)}
  \@addtofilelist{#1}
  \IfFileExists{#1}{}{\typeout{No file #1.}}
}
\makeatother
\xpretocmd{\digraph}{\addFileDependency{#2.dot}}{}{}
\begin{document}
\newcommand\relatedversion{}
\renewcommand\relatedversion{\thanks{The full version of the paper can be accessed at \protect\url{arXiv.org}.}} 
%
\title{\Large Scheduled Jacobian Chaining\relatedversion}
    \author{Simon Märtens\thanks{corresponding author: \email{simon.maertens@rwth-aachen.de}.}
    \and Uwe Naumann\thanks{all: Informatik 12: Software and Tools for Computational Engineering, RWTH Aachen University, 52056 Aachen, Germany (\url{https://www.stce.rwth-aachen.de}).}}
\date{}
\maketitle
%
%
%

\begin{abstract}
This paper addresses the efficient computation of Jacobian matrices for programs composed of sequential differentiable subprograms. By representing the overall Jacobian as a chain product of the Jacobians of these subprograms, we reduce the problem to optimizing the sequence of matrix multiplications, known as the {\sc Jacobian Matrix Chain Product} problem.

Solutions to this problem yield ``optimal bracketings'', which induce a precedence-constraint scheduling problem. We investigate the inherent parallelism in the solutions and develop a new dynamic programming algorithm as a heuristic that incorporates the scheduling. To assess its performance, we benchmark it against the global optimum, which is computed via a branch-and-bound algorithm.
\end{abstract}


\section{Introduction.}
This contribution builds upon prior work on optimizing the cost of Jacobian matrix accumulations. The associated combinatorial challenge, known as the {\sc Optimal Jacobian Accumulation (OJA)} problem, has been proven to be NP-complete \cite{naumann_optimal_2008}. Elimination techniques like {\sc Vertex Elimination (VE)} \cite{forth_integer_2012, griewank_calculation_1991, naumann_optimal_2004, tadjouddine_vertex-ordering_2008}, {\sc Edge Elimination (EE)} \cite{naumann_optimal_2004}, and {\sc Face Elimination (FE)} \cite{naumann_optimal_2004, naumann_elimination_2023} present adjacent problem formulations that all have NP-hard algorithms for the general optimal solutions and heuristics that may produce near-optimal or even optimal results in certain cases \cite{griewank_accumulating_2003}. These algorithms all assume a general directed acyclic graph (DAG) structure of the program, such as the one depicted in Figure~\ref{fig:simple_example}. In this study, we are investigating a special case where the differentiable program can be implemented as a sequence of calls to subprograms. This leads to the well-known {\sc Jacobian Chain Product} problem \cite{godbole_efficient_1973}.

For the remainder of this section, we introduce said problem as well as the necessary preliminaries. Section~\ref{sec:scheduled_jacobian_chaining} presents our contribution, a dynamic programming (DP) formulation that generates near-optimal solutions to the new {\sc Scheduled Jacobian Chain Product} problem. We assess the quality of the solutions with a statistical analysis in Section~\ref{sec:results}. Finally, we close with a short introduction to our reference implementation in Section~\ref{sec:implementation} and an outlook and conclusion in Section~\ref{sec:conclusion}.
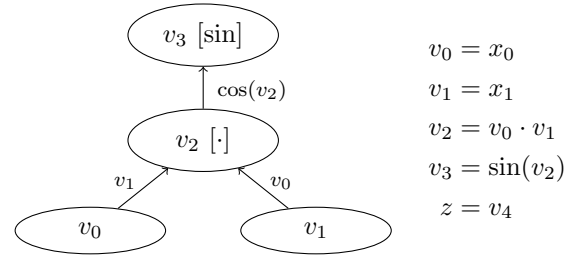
\begin{figure}[h]
    \centering
    \begin{tabular}{cc}
        \begin{minipage}[c]{.60\linewidth}
            \begin{tikzpicture}[scale=1, transform shape, ellipse]
                \begin{pgfscope}
                    \tikzstyle{every node}=[draw,ellipse,minimum width=2cm]
            	  \node (0) at (-1,0) {$v_0$};
            	  \node (1) at (2,0) {$v_1$};
            	  \node (2) at (0.5,1.2) {$v_2~[\cdot]$};
            	  \node (3) at (0.5,2.6) {$v_3~[\sin]$};
                \end{pgfscope}
                \begin{scope}[->]
                    \draw (0) -- (2) node[midway,left,xshift=4pt,yshift=2pt] {\footnotesize $v_1$};
                    \draw (1) -- (2) node[midway,right,xshift=-4pt,yshift=2pt] {\footnotesize $v_0$};
                    \draw (2) -- (3) node[midway,right,xshift=-5pt,yshift=-1pt] {\footnotesize $\cos(v_2)$};
              \end{scope}
            \end{tikzpicture}
        \end{minipage} &
        $\begin{aligned}
            v_0 &= x_0 \\
            v_1 &= x_1 \\
            v_2 &= v_0 \cdot v_1 \\
            v_3 &= \sin(v_2) \\
            z &= v_4
        \end{aligned}$ 
    \end{tabular}
    \caption{Example of a DAG labeled with the local derivatives and the corresponding primal single assignment code of the function $f: \R^2 \rightarrow \R$ with $\bm{x} \mapsto y \equiv f(\bm{x}) = \sin(x_0 \cdot x_1)$.} \label{fig:simple_example}
\end{figure}

Let us consider a multivariate differentiable function (\emph{primal})
\begin{equation*}
    \bm{y} = F(\bm{x}) : \R^n \rightarrow \R^m
\end{equation*}
that can be decomposed into modules $F_i$ such that
\begin{equation*}
    F = F_q \circ F_{q-1} \circ \dots \circ F_2 \circ F_{1}
\end{equation*}
with
\begin{equation*}
    \bm{z}_i = F_i(\bm{z}_{i-1}) : \R^{n_i} \rightarrow \R^{m_i}
\end{equation*}
for $i = 1, \dots, q$ and $\bm{z}_0 = \bm{x}, \bm{z}_q = \bm{y}$. The chain rule of differentiation tells us that the Jacobian $F'$ can be written as
\begin{equation} \label{eqn:jcp}
    F' \equiv \frac{d F}{d \bm{x}}=F'_q \cdot F'_{q-1} \cdot \ldots \cdot F'_1 \in \R^{m \times n} \; .
\end{equation}
Without loss of generality, we are interested in the minimization of the computational cost in terms of scalar fused multiply-add (\fma) operations ($a \cdot b + c,$ $a,b,c \in \R$) performed during the evaluation of Equation~(\ref{eqn:jcp}). The number of scalar \fmas{} can be translated into more useful cost measures, such as run-time, through hardware-dependent benchmarks.
%
%
\begin{figure*}[b!]  
    \begin{equation*}
        \fma_{j,i} = \begin{cases}
            |E_j| \cdot \min(n_j,m_j) & j=i \\[5pt]
            \min_{i \leq k < j} \left( \fma_{j,k+1}+\fma_{k,i} + m_j \cdot m_k \cdot n_i \right) & j>i
        \end{cases}
    \end{equation*}
    \caption{DP recurrence relation for Dense Jacobian Chain Product Bracketing}
    \label{eqn:djcpb}
\end{figure*}
\subsection{Algorithmic differentiation (AD).}
Algorithmic differentiation (AD) \cite{griewank_evaluating_2008} offers two fundamental modes for accumulating the elemental Jacobians
\[
    F'_i = F'_i(\Z_{i-1}) \in \R^{m_i \times n_i}
\]
prior to the evaluation of the matrix chain product in Equation~(\ref{eqn:jcp}). Directional derivatives are computed in {\em vector tangent mode} as
\begin{equation} \label{eqn:vt}
    \dot{Z}_i = F_i'(\Z_{i-1}) \cdot \dot{Z}_{i-1} \in \R^{m_i \times \dot{n}_i}
\end{equation}
with $\dot{Z}_{i-1} \in \R^{n_i \times \dot{n}_i}$ can be used to efficiently accumulate the entire Jacobian by seeding $\dot{Z}_{i-1}$ with the identity $I_{n_i} \in \R^{n_i \times n_i}$. For given $\Z_{i-1} \in \R^{n_i}$ and $\dot{Z}_{i-1} \in \R^{n_i \times \dot{n}_i},$ the Jacobian-free evaluation of the {\em vector tangent mode} is denoted as
\begin{equation} \label{eqn:vt_mf}
	\dot{Z}_i = \dot{F}_i(\Z_{i-1}) \cdot \dot{Z}_{i-1} \in \R^{m_i \times \dot{n}_i} \; .
\end{equation}

Similarly, {\em vector adjoint mode}
\begin{equation} \label{eqn:va}
    \bar{Z}_{i-1} = \bar{Z}_i \cdot F_i'(\Z_{i-1}) \in \R^{\bar{m}_i \times n_i}
\end{equation}
with $\bar{Z}_i \in \R^{\bar{m}_i \times m_i}$ can be used to accumulate the entire Jacobian by seeding $\bar{Z}_i$ with the identity $I_{m_i} \in \R^{m_i \times m_i}$. For given $\Z_{i-1} \in \R^{n_i}$ and $\bar{Z}_i \in \R^{\bar{m}_i \times m_i},$ the Jacobian-free evaluation of the {\em vector adjoint mode} is denoted as
\begin{equation} \label{eqn:va_mf}
    \bar{Z}_{i-1} = \bar{Z}_i \cdot \bar{F}_i(\Z_{i-1}) \in \R^{\bar{m}_i \times n_i}\; .
\end{equation}
\setcounter{equation}{7} 
%
%
\subsection{Jacobian Chaining.}
The {\sc [Dense] Jacobian Chain Product Bracketing} problem asks for a bracketing of the right-hand side of Equation~(\ref{eqn:jcp}), which minimizes the number of \fma\ operations. This problem can be solved via dynamic programming \cite{bellman_dynamic_1957,godbole_efficient_1973}. The DP recurrence relation in Figure~\ref{eqn:djcpb} yields an optimal bracketing at a computational cost of $O(q^3)$. The variable $\fma_{j,i}$ in Figure~\ref{eqn:djcpb} encodes the computational cost of evaluating a sub-chain 
\[
    F'_{j,i} \equiv F'_{j} \cdot F'_{j-1} \cdot \ldots \cdot F'_{i+1} \cdot F'_{i}, \quad j> i,
\]
of Equation~(\ref{eqn:jcp}) which corresponds to the Jacobian of the primal function
\[
    F_{j,i} \equiv F_{j} \circ F_{j-1} \circ \dots \circ F_{i+1} \circ F_{i}, \quad j> i.
\]
Furthermore, $|E_j|$ can be interpreted as the number of edges in the DAGs $G_i=G_i(\Z_{i-1})=(V_i,E_i)$ for $i=1,\ldots,q$ of $F_i=F_i(\Z_{i-1})$. Using Equation~(\ref{eqn:vt_mf}), tangent propagation induces a cost of \mbox{$\dot{n}_i \cdot |E_i|$}. Similarly, using Equation~(\ref{eqn:va_mf}), adjoint propagation induces a cost of \mbox{$\bar{m}_i \cdot |E_i|$} as explained in detail in \cite{naumann_matrix-free_2024}.
\par
A new {\sc Matrix-Free Limited-Memory} variant of the problem was introduced in \cite{naumann_matrix-free_2024} that makes extensive use of the matrix-free vector modes in Equation~(\ref{eqn:vt_mf}) and Equation~(\ref{eqn:va_mf}). We go into more detail when we introduce our new DP formulation in Section~\ref{ssec:dp_with_scheduling}. For a full list of existing DP formulations, see Appendix~\ref{ssec:appendix_mflmdjcb_formulations}.
%
%
\subsection{Elimination Sequences.}
\label{ssec:elimination_sequences}
Several types of steps exist in the elimination sequences or bracketings in the case of Jacobian chains. There are accumulation (\texttt{ACC}) steps that calculate an elemental Jacobian using the corresponding tangent or adjoint models. Then there are different elimination (\texttt{ELI}) steps which use already calculated Jacobians to calculate the Jacobians of larger sub-chains. We use the following notation for these steps:
\begin{itemize}
    \item {
        \texttt{ACC TAN/ADJ (i-1 i)}\\
        Accumulation of elemental Jacobian $F_{i}' = \frac{d \bm{z}_i}{d \bm{z}_{i-1}}$ in either tangent (\texttt{TAN}) or adjoint (\texttt{ADJ}) mode by seeding the corresponding derivative models in Cartesian unit standard directions in $\R^{n_i}$ or $\R^{m_i}$ respectively, i.e.,
        \begin{equation}
            F_{i}' = \dot{F}_{i} \cdot I_{n_i} = I_{m_i} \cdot \bar{F}_{i} \,.
        \end{equation}
    }
    \item {
        \texttt{ELI TAN (i-1 k j)}\\
        Calculation of Jacobian $F_{j,i}' = \frac{d \bm{z}_j}{d \bm{z}_{i-1}}$ by seeding the tangent model of $F_{j, k+1}$ with the preaccumulated Jacobian $F_{k, i}' = \frac{d \bm{z}_k}{d \bm{z}_{i-1}}$, i.e.
        \begin{equation}
            F_{j,i}' = \dot{F}_{j, k+1} \cdot F_{k, i}' \,.
        \end{equation}
    }
    \item {
        \texttt{ELI ADJ (i-1 k j)}\\
        Calculation of Jacobian $F_{j,i}' = \frac{d \bm{z}_j}{d \bm{z}_{i-1}}$ by seeding the adjoint model of $F_{k, i}$ with the preaccumulated Jacobian $F_{j, k+1}' = \frac{d \bm{z}_j}{d \bm{z}_{k}}$, i.e.
        \begin{equation}
            F_{j,i}' = F_{j, k+1}' \cdot \bar{F}_{k, i} \,.
        \end{equation}
    }
    \item {
        \texttt{ELI MUL (i-1 k j)}\\
        Calculation of Jacobian $F_{j,i}' = \frac{d \bm{z}_j}{d \bm{z}_{i-1}}$ by multiplying the already accumulated Jacobians $F_{j, k+1}'$ and $F_{k, i}'$, i.e.
        \begin{equation}
            F_{j,i}' = F_{j, k+1}' \cdot F_{k, i}' \,.
        \end{equation}
    }
\end{itemize}
An example of a possible elimination sequence for a Jacobian chain of length six could look like this:
\begin{lstlisting}[basicstyle=\ttfamily\small]
    1: ACC TAN  (4 5)
    2: ELI TAN (4 5 6)
    3: ACC ADJ  (3 4)
    4: ELI ADJ (1 3 4)
    5: ELI MUL (1 4 6) 
    6: ELI ADJ (0 1 6)
\end{lstlisting}
Our DP solver optimized that sequence and calculated a predicted cost of \num{50977476} \fma. The corresponding bracketing can be written as
\begin{align*}
    F' &= F_{6,1}' \\ &= \left( \left( \dot{F}_6 \cdot \left(\dot{F}_5 \cdot I_{n_5}\right) \cdot \left(I_{m_4} \cdot \bar{F}_4 \cdot \bar{F}_3 \cdot \bar{F}_2 \right)\right)\right) \cdot \bar{F}_1 \;.
\end{align*}


\section{\emph{Scheduled} Jacobian Chaining.}
\label{sec:scheduled_jacobian_chaining}
This section introduces our approach to generating elimination sequences (bracketings) that are optimized for parallel execution. There are multiple levels of parallelism that have to be considered:
\begin{enumerate}
    \item {
        \textbf{Primal code parallelism:} The primal code itself might already be parallelized with shared \cite{openmp_architecture_review_board_openmp_2021} or distributed memory \cite{message_passing_interface_forum_mpi_2023, naumann_framework_2008} management.
    }
    \item {
        \textbf{Preaccumulation and elimination:} A single step inside the elimination sequence is inherently parallel \cite{naumann_framework_2008}. AD vector modes can exploit loop-based parallelism or even SIMD instructions \cite{farrell_formal_1996} to accumulate the elemental Jacobians. 
    }
    \item {
        \textbf{Independence of elimination steps:} The elimination steps are partially independent of each other. For example, the preaccumulations can all run in parallel. The eliminations and multiplications depend on the availability of one or both local Jacobians of the sub-chains.
    }
\end{enumerate}
The first two types of parallelism are well-understood and handled by various AD software tools. This paper focuses on the last type.
%
%
\subsection{The Scheduling Problem.}
\label{ssec:scheduling_problem}
The bracketings induce parallelism based on tasks with precedence constraints. Using the notation introduced by \cite{graham_optimization_1979}, our scheduling problem could be classified as $\textrm{P}_m \, | \, \textrm{prec} \, | \, C_{\textrm{max}}$, i.e. we minimize the overall completion time (\emph{makespan}) on $m$ identical machines while respecting precedence constraints. The precedence constraints for the bracketings are somewhat special because a task can only be the predecessor of exactly one other task. Any intermediate Jacobian is only accumulated and used once. Therefore, the constraints can be visualized as a DAG where all nodes (tasks) have an out-degree of 1. The latter is called an \emph{in-tree}, which gives us the final classification $\textrm{P}_m \, | \, \textrm{intree} \, | \, C_{\textrm{max}}$. Figure~\ref{fig:task_dependencies} shows the in-tree dependencies for the example at the end of Section~\ref{ssec:elimination_sequences}.
\begin{figure}[h]
    \centering
    \digraph[scale=0.5]{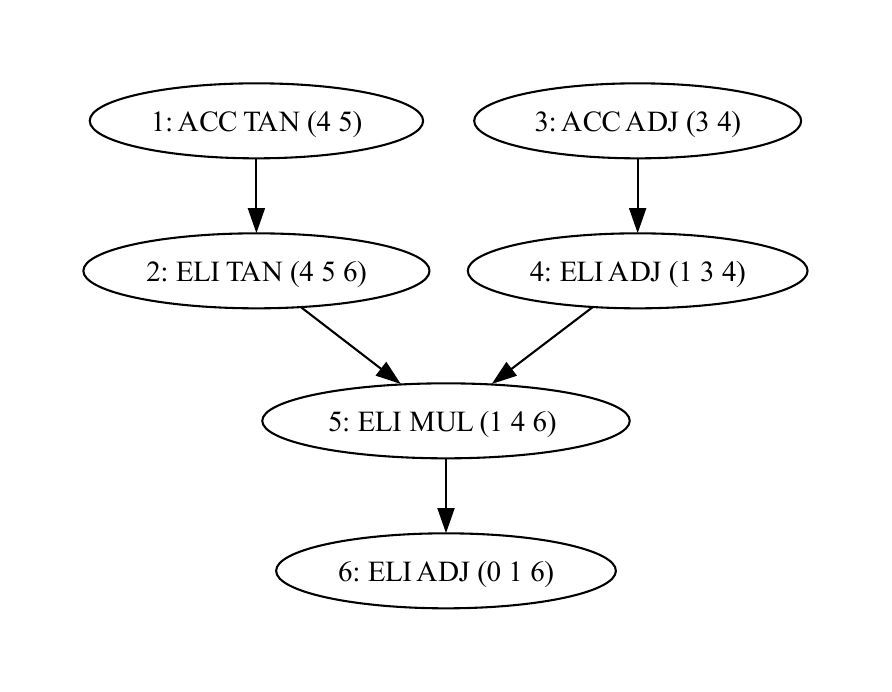}{
        1 [label="1: ACC TAN  (4 5)"];
        2 [label="2: ELI TAN (4 5 6)"];
        3 [label="3: ACC ADJ  (3 4)"];
        4 [label="4: ELI ADJ (1 3 4)"];
        5 [label="5: ELI MUL (1 4 6) "];
        6 [label="6: ELI ADJ (0 1 6)"];
        1 -> 2;
        2 -> 5;
        3 -> 4;
        4 -> 5;
        5 -> 6;
    }
    \caption{Task dependencies for example elimination sequence at the end of Section~\ref{ssec:elimination_sequences}.}
    \label{fig:task_dependencies}
\end{figure}

Each one of the $n$ elimination steps is a job $j \in J = [1,n]$ and has a given runtime $p_j$. The $m$ identical machines are assigned an identifier $i \in M = [1, m]$. The goal of our scheduling algorithms is to assign a machine $\sigma(j)$ as well as a start time $S_j$ to each job. Since we are only considering non-preemptive scheduling (the jobs cannot be canceled), the completion time is $C_j = S_j + p_j$. The \emph{makespan} $C_{\textrm{max}} = \max_j C_j$ is our optimization objective subject to the precedence constraints $\hat{E}$:
$$
    (j', j) \in \hat{E} \,\, \textrm{with} \,\, j',j \in J \,\Rightarrow\, j' \prec j \,\Leftrightarrow\, S_j \geq C_{j'} \; .
$$
The time when a machine becomes available is denoted by $\alpha_i$. When nothing is scheduled $\left(\sigma(j) = \varnothing, \, \forall j \in J \right)$ we assume that all machines are idling, i.e., $\alpha_i = 0$. Their availability changes during the scheduling according to
$$
    \alpha_i = \max_{j, \sigma(j) = i}{C_j} \; .
$$
The problem is NP-hard, as shown in \cite{baskiyar_scheduling_2001, karp_reducibility_1972}. Restricted formulations with equal processing times $p_j = 1$ are computationally tractable \cite{hu_parallel_1961}, but rarely applicable to our problem. All Jacobians in the chain would need to be of equal size, and the AD model evaluations would need to cost the same as a multiplication between two Jacobians. We will still use the algorithms as a heuristic. Further information on the general scheduling problem is provided in \cite{graham_bounds_1966, graham_bounds_1971}.
\par
There are more complex formulations that consider communication delays between jobs \cite{davies_scheduling_2020}. In the context of Jacobian chain bracketings, these delays might be significant depending on the size of the elemental Jacobians, which have to be stored or transferred between machines. This aspect is the subject of ongoing research.
%
%
\subsection{DP with Scheduling.}
\label{ssec:dp_with_scheduling}
\begin{figure*}[ht]
    \centering
    \begin{equation*}
        \fma_{j,i}^{(t)} = \begin{cases}
            |E_j| \cdot \begin{cases}
                n_j & |E_j| > \overline{M} \\
                \min(n_j,m_j) & \text{otherwise}
            \end{cases} & j=i \\[15pt]
            \min_{i \leq k < j} \left( \min \left(
            \begin{aligned}
                &m_j \cdot m_k \cdot n_i + \min \left(
                \begin{aligned}
                    &\fma_{j,k+1}^{(t)}+\fma_{k,i}^{(t)} \\
                    &\min_{1 \leq t^{*} < t} \left( \max\left(\fma_{j,k+1}^{(t^{*})}, \fma_{k,i}^{(t - t^{*})}\right) \right)
                \end{aligned} \right) \\
                &\fma_{j,k+1}^{(t)} + m_j \cdot \sum_{\nu=i}^k |E_\nu| \quad \text{if}~\sum_{\nu=i}^k |E_\nu| \leq \overline{M} \\
                &\fma_{k,i}^{(t)} + n_i \cdot \sum_{\nu=k+1}^j |E_\nu|
            \end{aligned}
            \right) \right) & j>i \; .
        \end{cases}
    \end{equation*}
    \caption{DP relation for Scheduled Limited-Memory Matrix-Free Dense Jacobian Chain Product Bracketing.}
    \label{eqn:scheduled_chaining}
\end{figure*}
Because the underlying scheduling problem is NP-hard, the {\sc Scheduled Jacobian Chain Product} problem is also NP-hard. Therefore, no tractable DP formulation guarantees an optimal solution. Our goal is to introduce scheduling into the DP formulation to produce a heuristic that produces near-optimal results on average.
\par
We augment each sub-problem $\fma_{j, i}$ in the previous DP formulation (Appendix~\ref{ssec:appendix_mflmdjcb_formulations}) with the number of machines/threads $t \leq m$ that is available to solve it: $\fma_{j, i}^{(t)}$. This step allows us to modify the DP formulation as follows:
\par
\begin{flushleft}
    \textbf{Accumulation}
\end{flushleft}
\[
    |E_j| \cdot \begin{cases}
        n_j & |E_j| > \overline{M} \\
        \min(n_j,m_j) & \text{otherwise}
    \end{cases}
\]
The accumulation of elemental Jacobians does not depend on any other elimination step. We need to consider the cost of accumulating the elemental Jacobian in tangent and adjoint mode and choose the one that is cheaper. A memory limit $\overline{M}$ was introduced in \cite{naumann_matrix-free_2024} that decides whether the adjoint mode is feasible due to the fact that a tape needs to be stored in persistent memory. The only consideration for us here is the definition $\overline{M}$ in the parallel setting. In distributed-memory frameworks, we can define it as the available memory per machine/thread $\overline{M}_{\textrm{distributed}} = \frac{\overline{M}_{\textrm{total}}}{m}$. If the total memory is shared, however, the memory limit would actually increase with the number of threads that are allocated for the accumulation $\overline{M}_{\textrm{shared}} = t \cdot \frac{\overline{M}_{\textrm{total}}}{m}$.
\par
\begin{flushleft}
    \textbf{Elimination}
\end{flushleft}
\[
    \fma_{j,k+1}^{(t)} + m_j \cdot \sum_{\nu=i}^k |E_\nu| \quad \text{if}~\sum_{\nu=i}^k |E_\nu| \leq \overline{M}
\]
\[
    \fma_{k,i}^{(t)} + n_i \cdot \sum_{\nu=k+1}^j |E_\nu|
\]
The matrix-free adjoint and tangent eliminations for the sub-problem $\fma_{j, i}$ depend on a preaccumulated Jacobian $\fma_{j,k+1}$ or $\fma_{k,i}$, respectively. The cost of evaluating the rest of the Jacobian chain in adjoint or tangent mode is added to the cost of the preaccumulated Jacobian. The eliminations are evaluated serially due to the precedence constraints. The number of available threads/machines is completely propagated to the sub-problems $\fma_{j,k+1}^{(t)}$ or $\fma_{k,i}^{(t)}$. The same consideration about the memory limit for the adjoint mode applies.
\par
\begin{flushleft}
    \textbf{Multiplication}
\end{flushleft}
\[
    m_j \cdot m_k \cdot n_i + \min \left(
        \begin{aligned}
            &\fma_{j,k+1}^{(t)}+\fma_{k,i}^{(t)} \\
            &\min_{1 \leq t^{*} < t} \left( \max\left(\fma_{j,k+1}^{(t^{*})}, \fma_{k,i}^{(t - t^{*})}\right) \right)
        \end{aligned} \right)
\]
The multiplication of two Jacobians is the only task that allows us to parallelize because it depends on two independent sub-problems. We divide the number of available threads between the two 
sub-problems, $\fma_{j,k+1}^{(t^{*})}$ and $\fma_{k, i}^{(t - t^{*})}$ with $1 \leq t^{*} < t$. The sub-problems are now considered parallel, and we only count the maximum cost between these two. There is also the option to propagate the full amount of available threads to both sub-problems and run them in serial. There are cases where that is the best option, even if more than one thread is available. For example, when the two sub-problems each contain multiple multiplication steps that can all run in parallel. If that's the case, it can be better to run the sub-problems one after the other with the maximum number of threads each.
\par
The final DP formulation is shown in Figure~\ref{eqn:scheduled_chaining}. Formulations without memory limits and without matrix-free eliminations are listed in Appendix~\ref{ssec:appendix_scheduled_formulations}.
\par
The recurrence relation in Figure~\ref{eqn:scheduled_chaining} does not actually create a schedule; it just assigns a number of available threads to each sub-problem. To create the final schedule, we need to create a thread pool that we then partition among the tasks. Let us assume the general case of a chain of length $q$ and $m$ available machines/threads. For each triplet $(j, i, t)$ with $1 \leq i \leq j \leq q$ and $1 \leq t \leq m$ the DP table stores the optimal $\fma$ as well as additional information, most importantly the values for $k$ and $t^{*}$ (split positions of the chain and the number of available threads). Using this information, one can backtrack the best sequence. We start at the node for $\fma_{q,1}^{(t)}$ giving it the entire available machine pool $[i_l, i_u] = M = [1, t] $ with $i_l$ and $i_u$ the lower and upper bounds for the machine identifiers. From there, we pass the entire machine pool to the sub-problems, except if we encounter a multiplication that divides the available threads among its sub-problems. We split the machine pool accordingly; $\fma_{j,k+1}^{(t^{*})}$ receives $[i_l, i_l + t^{*} - 1]$ and $\fma_{k, i}^{(t - t^{*})}$ receives $[i_u - t^{*} + 1, i_u]$. The notation for the steps in the elimination sequences is extended with the machine pool. If only one machine is available, then we only note the machine ID. Finally, if we have the complete sequence, then we choose the machines with the lowest identifier in each task's machine pool.

Applying the scheduled DP to the Jacobian chain from the example at the end of Section~\ref{ssec:elimination_sequences} with 3 available machines, we get the following improved elimination sequence with machine pool assignments:
\begin{lstlisting}[basicstyle=\ttfamily\small]
    1: ACC TAN  (4 5)  [1]
    2: ELI TAN (4 5 6) [1]
    3: ACC ADJ  (3 4)  [2]
    4: ELI ADJ (2 3 4) [2]
    5: ACC TAN  (1 2)  [3]
    6: ELI MUL (1 2 4) [2,3]
    7: ELI ADJ (0 1 4) [2,3]
    8: ELI MUL (0 4 6) [1,3]
\end{lstlisting}
Our scheduled DP solver optimized that sequence and predicted an evaluation cost of \num{29797092}, which is about $60\%$ of the sequential cost. Figure~\ref{fig:task_dependencies_scheduled} shows the task dependencies of the scheduled elimination sequence.
\begin{figure}[h]
    \hspace{-37pt}
    \digraph[scale=0.49]{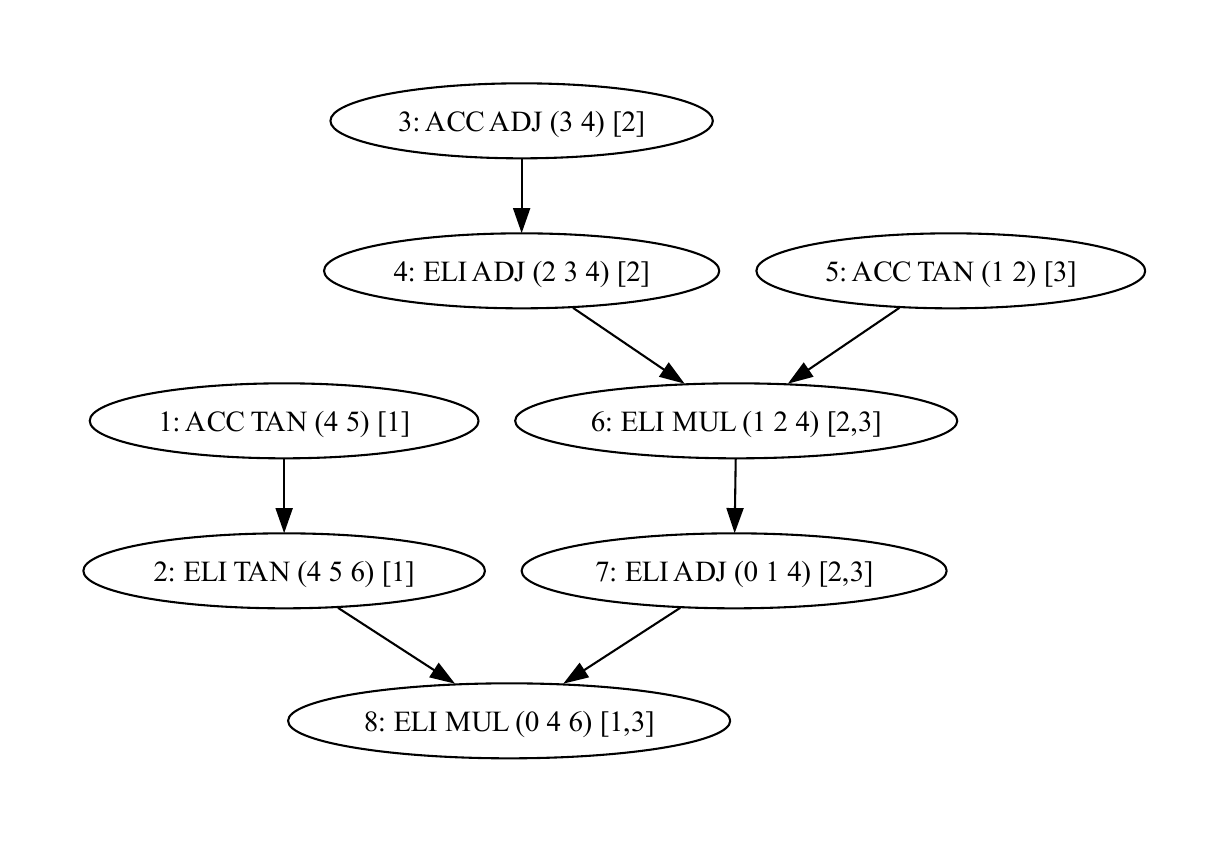}{
        1 [label="1: ACC TAN (4 5) [1]"];
        2 [label="2: ELI TAN (4 5 6) [1]"];
        3 [label="3: ACC ADJ (3 4) [2]"];
        4 [label="4: ELI ADJ (2 3 4) [2]"];
        5 [label="5: ACC TAN (1 2) [3]"];
        6 [label="6: ELI MUL (1 2 4) [2,3]"];
        7 [label="7: ELI ADJ (0 1 4) [2,3]"];
        8 [label="8: ELI MUL (0 4 6) [1,3]"];
        1 -> 2
        2 -> 8
        3 -> 4
        4 -> 6
        5 -> 6
        6 -> 7
        7 -> 8
    }
    \caption{Task dependencies for example elimination sequence with 3 available machines.}
    \label{fig:task_dependencies_scheduled}
\end{figure}
%
%
\subsection{Optimality.}
\label{ssec:optimality}
Since the formulation in Figure~\ref{eqn:scheduled_chaining} does not guarantee optimality, the classical DP considerations about \emph{optimal substructure} and \emph{overlapping sub-problems} do not apply for arbitrary $t$. For $t=1$, the formulations collapse back into the previous formulations in Appendix~\ref{ssec:appendix_mflmdjcb_formulations}. For $t=\infty$ (unlimited machines/threads), the multiplication term can be reduced to
\begin{equation}
    \fma_{j,i}^{(\infty)} = m_j \cdot m_k \cdot n_i + \max\left(\fma_{j,k+1}^{(\infty)}, \fma_{k,i}^{(\infty)}\right)
\end{equation}
because it does not matter how we split the threads. The sub-problems will always have unlimited threads available.
\par 
\begin{theorem}
    Figure~\ref{eqn:scheduled_chaining} with $t=\infty$ solves the {\sc Scheduled Limited-Memory Matrix-Free Dense Jacobian Chain Product Bracketing} problem.
\end{theorem}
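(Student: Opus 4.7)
The plan is to proceed by induction on the sub-chain length $j-i$ and show that $\fma_{j,i}^{(\infty)}$ equals the optimal makespan for computing $F'_{j,i}$ under the {\sc Scheduled Limited-Memory Matrix-Free Dense Jacobian Chain Product Bracketing} model. The base case $j=i$ is immediate: the only admissible task is a single elemental accumulation whose cost is fixed by the memory guard on $|E_j|$, independent of the thread budget. Before attacking the inductive step, I would verify the reduction already noted in the excerpt: with $t=\infty$ the inner minimum $\min_{1 \leq t^{*} < t}\max(\fma_{j,k+1}^{(t^{*})},\fma_{k,i}^{(t-t^{*})})$ collapses to $\max(\fma_{j,k+1}^{(\infty)},\fma_{k,i}^{(\infty)})$, and the fully serial alternative $\fma_{j,k+1}^{(\infty)}+\fma_{k,i}^{(\infty)}$ is dominated by it, leaving precisely the simplified recurrence displayed just before the theorem.

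For the inductive step I would argue optimal substructure. Any feasible schedule for $F'_{j,i}$ ends with a unique terminal task producing the root Jacobian: a tangent elimination, an adjoint elimination, or a multiplication at some split $k$ with $i \leq k < j$. In each of these cases the schedule decomposes into two disjoint sub-schedules for $F'_{j,k+1}$ and $F'_{k,i}$ followed by the terminal task. Because $t=\infty$ removes all resource contention, replacing each sub-schedule by one that is individually optimal for its sub-chain cannot increase the makespan. The way the two sub-makespans combine is dictated by the in-tree precedence DAG: an elimination consumes one preaccumulated Jacobian and then executes serially against the remaining sub-chain, hence the sum, whereas a multiplication depends on two independent preaccumulated Jacobians and can start as soon as the later of them is ready, hence the maximum. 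Taking the minimum over the admissible terminal tasks and all split points $k$, and invoking the induction hypothesis on the two strictly shorter sub-chains, reproduces exactly the recurrence of Figure~\ref{eqn:scheduled_chaining} specialised to $t=\infty$. The memory-limit guards on the adjoint-based branches carry over unchanged from the sequential formulation in Appendix~\ref{ssec:appendix_mflmdjcb_formulations}, since at $t=\infty$ every sub-task effectively runs on its own machine and the per-machine bound $\overline{M}$ applies locally.

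The main obstacle I expect is the no-interleaving step in the optimal-substructure argument: in principle a globally optimal schedule could interleave tasks of the two sub-chains in a way that is not captured by two independent sub-schedules glued together. With an in-tree precedence structure and unbounded machines, however, every task can be started the moment its unique predecessor completes, regardless of what the other branch is doing, so no cross-branch reordering can shorten the critical path. Pinning down this exchange argument carefully is what makes the simple-looking recurrence provably optimal at $t=\infty$, whereas it fails for finite $t$ because the $t^{*}$ split then genuinely couples the two sub-problems through the shared thread budget, which is precisely why the general case is only a heuristic.
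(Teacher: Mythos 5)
Your proposal is correct and follows essentially the same route as the paper: at $t=\infty$ the thread budget decouples the sub-problems, so the argument reduces to showing that \emph{optimal substructure} and \emph{overlapping sub-problems} carry over from the serial formulation, with the multiplication term combining sub-makespans by a maximum rather than a sum. The paper's own proof simply asserts this by reference to the serial case, whereas you actually carry out the induction and the no-interleaving exchange argument (every task in the in-tree can start the instant its unique predecessor finishes when machines are unbounded), which is the detail that makes the optimal-substructure claim rigorous in the scheduling setting; the only small imprecision is your statement that every terminal task yields \emph{two} sub-schedules, which holds for the multiplication case but not for the matrix-free eliminations, where only one operand is preaccumulated --- a point you in fact correct yourself when you derive the sum versus the maximum.
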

\begin{proof}
    {\sc Scheduled Dense Jacobian Chain Product Bracketing} inherits the \emph{overlapping sub-problems} property from {\sc Dense Jacobian Chain Product Bracketing} since the sub-problems do not change. Replacing the serial execution of the sub-problems ($\fma_{j,k+1} + \fma_{k,i}$) with its parallel counterpart $\max\left(\fma_{j,k+1}, \fma_{k,i}\right)$ may only reduce the cost of $\fma_{j,i}$. Following the proof in \cite{naumann_matrix-free_2024}, the \emph{optimal substructure} is also preserved and will be inherited by the {\sc Limited-Memory Matrix-Free} variants.
\end{proof}
In reality, the number of available machines is limited. However, the length of the Jacobian chain and the number of accumulations in the elimination sequence are natural limiters for the number of machines we can use. Therefore, if we use as many machines as there are accumulations, the scheduled DP generates an optimal sequence.
%
%
\subsection{Correctness and Determinism.}
\label{ssec:correctness}
\begin{theorem}
    The elimination sequences produced by Figure~\ref{eqn:scheduled_chaining} deterministically calculate the correct Jacobian matrix.
\end{theorem}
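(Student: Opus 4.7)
The plan is to separate the claim into two independent parts: correctness (the computed matrix equals $F' = F'_q \cdot \ldots \cdot F'_1$) and determinism (the result is independent of the non-deterministic aspects of parallel execution). Correctness should mostly carry over from the unscheduled DP formulations in Appendix~\ref{ssec:appendix_mflmdjcb_formulations}, so the new content is primarily about determinism in the parallel setting.

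For correctness I would proceed by structural induction on the backtracking tree of Figure~\ref{eqn:scheduled_chaining}. The base case $\fma_{i,i}^{(t)}$ is an \texttt{ACC} step: the tangent and adjoint modes, when seeded with the appropriate identity matrix as described in Section~\ref{ssec:elimination_sequences}, both produce $F'_i$, so choosing the cheaper mode (subject to the memory bound $\overline{M}$) is correct. In the inductive step $j>i$ the DP selects a split $k$ together with one of three options, namely matrix-free tangent elimination, matrix-free adjoint elimination, or explicit multiplication. Each of these, by the definitions listed in Section~\ref{ssec:elimination_sequences}, combines the two sub-Jacobians via the chain rule to yield $F'_{j,i} = F'_{j,k+1} \cdot F'_{k,i}$, and the inductive hypothesis guarantees that the sub-Jacobians are themselves correct.

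For determinism I would argue three things: (i) after fixing a tie-breaking rule on the $\min$ operators in Figure~\ref{eqn:scheduled_chaining}, the DP deterministically yields a unique bracketing together with unique thread splits $t^{*}$; (ii) the backtracking procedure described at the end of Section~\ref{ssec:dp_with_scheduling} then deterministically assigns a machine pool and, by taking the lowest identifier, a single executing machine to each task; and (iii) any two tasks that the schedule permits to run concurrently commute, in the sense that their outputs are unaffected by the order of execution. Point (iii) rests on the in-tree property noted in Section~\ref{ssec:scheduling_problem}: every intermediate Jacobian is produced by exactly one task and consumed by exactly one successor, so no two tasks share read/write storage and no race condition can occur. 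Within each task the sequence of scalar \fma\ operations is fixed by the selected mode and the chosen bracketing, so the output is bit-identical across runs.

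I expect determinism to be the main obstacle, because correctness is essentially inherited from the unscheduled variants. The delicate point is to rule out that schedule-dependent interleaving of concurrent tasks could alter any computed Jacobian. Making this precise amounts to invoking the in-tree dependency structure established in Section~\ref{ssec:scheduling_problem}: once each intermediate has a unique producer and a unique consumer, there is no shared mutable state between concurrently running tasks, which closes the argument.
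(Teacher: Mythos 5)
Your proposal is correct but argues along a different axis than the paper. For correctness, the paper does not do a structural induction at all: it simply observes that the search space of Figure~\ref{eqn:scheduled_chaining} is the serial search space extended by a subspace of thread partitions, so every elimination sequence it emits is also a valid serial elimination sequence and therefore computes the correct Jacobian (in infinite precision) by appeal to the unscheduled formulations. Your induction over the backtracking tree proves the same thing from first principles; it is more self-contained but re-derives what the reduction to the serial case gets for free. For determinism, the two proofs emphasize complementary halves of the problem. You concentrate on race-freedom: the in-tree property gives each intermediate Jacobian a unique producer and a unique consumer, so concurrently scheduled tasks share no mutable state and the interleaving cannot affect the result. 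The paper instead concentrates on liveness: the machine-pool assignment plus adherence to the precedence constraints guarantees no task starts before its inputs are ready, so deadlocks are impossible and the program terminates. Strictly speaking both halves are needed for ``deterministically calculate,'' and your write-up should add a sentence on termination (your point~(ii) gives a well-defined schedule, but you never say the execution completes); conversely, your explicit non-interference argument makes precise something the paper leaves implicit. One small caution: your claim of bit-identical outputs across runs is stronger than the paper's, which qualifies correctness with ``in infinite precision''; since each task's internal operation order is fixed and tasks do not share state, your stronger claim is defensible, but you should state the floating-point assumption explicitly.
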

\begin{proof}
    The search space of Figure~\ref{eqn:scheduled_chaining} is the same as the search space of its serial counterpart, extended by a subspace of possible thread partitions. In other words, for each possible elimination sequence, a certain number of possible schedules is considered. Thus, all elimination sequences produced by Figure~\ref{eqn:scheduled_chaining} can run in serial and will produce the same correct Jacobian (in infinite precision).

    As explained above, Figure~\ref{eqn:scheduled_chaining} does not actually create a schedule; it just assigns a machine pool to each task. Each pool will contain at least one machine. Given that we adhere to the precedence constraints in the sequence, no task will start before all necessary sub-problems are done. Thus, deadlocks are impossible, and the program will terminate.
\end{proof}


\section{Statistical Assessment.}
\label{sec:results}
To assess the quality of the elimination sequences produced by the presented algorithm, we also implemented a nested Branch \& Bound (B\&B) algorithm that evaluates all possible schedules for all possible elimination sequences for a given Jacobian chain. It results in a global optimum, which we can compare with the DP solution.
\par
We generate $10^4$ random Jacobian chains of constant length $q$. The sizes $m_i, n_i$ are sampled from a uniform distribution $m_i, n_i \sim \mathcal{U}(5,50)$ and the size of the DAGs are sampled from $E \sim \mathcal{U}(10^3,10^4)$. For each chain the DP and B\&B solvers generate sequences with makespans $\fma_{DP}^{(m)}$ and $\fma_{opt}^{(m)}$ ($1 \leq m \leq q$). The ratio
\begin{equation}
    \mathcal{C}^{(m)} = \frac{\fma_{opt}^{(m)}}{\fma_{DP}^{(m)}} \leq 1
\end{equation}
is the measure we are analyzing. This is the ratio between the cost of the optimal solution and the cost of the sequence the DP solver generated. The ratio can also be interpreted as a percentage. For example, $\mathcal{C}^{(m)} = 0.9$ means that the optimal solution needs $10\%$ less \fmas{} than the sequence the DP solver generated. The theoretical worst-case would be
\begin{equation}
    \mathcal{C}_{WC}^{(m)} = \frac{1}{\min(m, \tilde{m})}
\end{equation}
with $\tilde{m}$ being the number of useful machines, i.e., the number of machines used by the solution with unlimited parallelism. A proof for this can be found in Appendix~\ref{ssec:appendix_worstcase_analysis}.
\begin{figure}[h]
    \centering
    \input{figures/results_length_6}
    \caption{Ratio between $\fma_{opt}^{(m)}$ and $\fma_{DP}^{(m)}$ calculated for $10^4$ random chains of length $q=6$.}
    \label{fig:results_length_6}
\end{figure}
Figure~\ref{fig:results_length_6} shows the results for chains of length $q=6$. The boxes mark the lower (first) and upper (third) quartiles, with the median as the line inside the box. The whiskers reach from the 2nd percentile to the 98th percentile, and the crosses are outliers. Further benchmarks are listed in Appendix~\ref{ssec:appendix_stats}. Additionally, average and worst performances are listed in Table~\ref{tab:average_dp_results} and \ref{tab:worst_dp_results}.
\begin{table}[h]
    \centering
    \begin{tabular}{|c||c|c|c|c|c|c|c|} \hline
        $q \backslash m$ & $2$    & $3$    & $4$    & $5$    & $6$    & $7$    & $8$   \\ \hline\hline
                     $2$ & $1$    &        &        &        &        &        &       \\ \hline
                     $3$ & $.987$ & $1$    &        &        &        &        &       \\ \hline
                     $4$ & $.965$ & $.993$ & $1$    &        &        &        &       \\ \hline
                     $5$ & $.950$ & $.980$ & $.998$ & $1$    &        &        &       \\ \hline
                     $6$ & $.938$ & $.966$ & $.992$ & $.999$ & $1$    &        &       \\ \hline
                     $7$ & $.930$ & $.953$ & $.992$ & $.999$ & $1.00$ & $1$    &       \\ \hline
                     $8$ & $.923$ & $.949$ & $.991$ & $.998$ & $1.00$ & $1.00$ & $1$   \\ \hline
    \end{tabular}
    \caption{Average $\mathcal{C}^{(m)}$ for $2 \leq q \leq 8$ and $2 \leq m \leq 8$.}
    \label{tab:average_dp_results}
\end{table}
\begin{table}[h]
    \centering
    \begin{tabular}{|c||c|c|c|c|c|c|c|} \hline
         $q \backslash m$ & $2$    & $3$    & $4$    & $5$    & $6$    & $7$    & $8$   \\ \hline\hline
                      $2$ & $1$    &        &        &        &        &        &       \\ \hline
                      $3$ & $.733$ & $1$    &        &        &        &        &       \\ \hline
                      $4$ & $.717$ & $.733$ & $1$    &        &        &        &       \\ \hline
                      $5$ & $.705$ & $.688$ & $.770$ & $1$    &        &        &       \\ \hline
                      $6$ & $.713$ & $.705$ & $.735$ & $.825$ & $1$    &        &       \\ \hline
                      $7$ & $.724$ & $.720$ & $.758$ & $.804$ & $.899$ & $1$    &       \\ \hline
                      $8$ & $.731$ & $.728$ & $.761$ & $.802$ & $.865$ & $.931$ & $1$   \\ \hline
    \end{tabular}
    \caption{Minimal $\mathcal{C}^{(m)}$ for $2 \leq q \leq 8$ and $2 \leq m \leq 8$.}
    \label{tab:worst_dp_results}
\end{table}

From the tables, we can clearly see a trend in which the solutions from the DP solver worsen on average with increasing chain length and improve with increasing machines. The worst performances, i.e., the minimal $\mathcal{C}^{(m)}$ we encountered, stagnate around the same values with a small trend to improvements with more machines. We can, however, construct a synthetic sequence that tends towards the worst-case $\mathcal{C}_{WC}^{(m)}$ if we let certain Jacobian and DAG sizes tend towards infinity. These extreme cases are less likely to occur with longer sequences and more machines.
\par
Finally, Table~\ref{tab:optimal_dp_precentages} shows the percentage of how many times the DP solver was able to find the global optimum, i.e., $\mathcal{C}^{(m)} = 1$. We stop with the benchmarks at $q=8$ because, for some larger chains, the B\&B algorithm was not feasible any longer with run times up to several hours.
\begin{table}[H]
    \centering
    \begin{tabular}{|c||c|c|c|c|c|c|c|} \hline
         $q \backslash m$ & $2$    & $3$    & $4$    & $5$    & $6$    & $7$    & $8$   \\ \hline\hline
                      $2$ & $100$  &        &        &        &        &        &       \\ \hline
                      $3$ & $83.3$ & $100$  &        &        &        &        &       \\ \hline
                      $4$ & $59.6$ & $89.8$ & $100$  &        &        &        &       \\ \hline
                      $5$ & $43.5$ & $74.3$ & $95.6$ & $100$  &        &        &       \\ \hline
                      $6$ & $32.9$ & $56.7$ & $86.6$ & $98.3$ & $100$  &        &       \\ \hline
                      $7$ & $25.1$ & $43.7$ & $84.1$ & $96.6$ & $99.5$ & $100$  &       \\ \hline
                      $8$ & $19.4$ & $38.5$ & $80.7$ & $93.6$ & $98.3$ & $99.7$ & $100$ \\ \hline
    \end{tabular}
    \caption{Percentage of how many times the DP solver was able to find the global optimum, i.e., $\mathcal{C}^{(m)} = 1$.}
    \label{tab:optimal_dp_precentages}
\end{table}


\section{Implementation.}
\label{sec:implementation}
Our reference implementation of the DP and B\&B algorithms can be found at:
\begin{center}
    \url{https://doi.org/10.5281/zenodo.15373160}
\end{center}
\par
After following the steps in the \texttt{README.md} file there should be two executables: \texttt{jcdp} and \texttt{jcdp\_batch}. The former generates one Jacobian chain, executes the DP and B\&B solvers, and outputs all obtained solutions and several other statistics. The latter generates several random chains, and it writes the makespans of the different solutions into a file. It is used to generate the dataset for the statistical analysis. Both programs expect a configuration file, of which there are several in \texttt{additionals/configs/}. The config file that produces the example sequence at the end of Section~\ref{ssec:elimination_sequences} is shown in Figure~\ref{fig:config_file}.
\begin{figure}[h]
    \centering
    \begin{lstlisting}
        length 10
        size_range 5 500
        dag_size_range 1000 100000
        available_threads 1
        available_memory 0
        matrix_free 1
        time_to_solve 5
        seed 2165743199
    \end{lstlisting}
    \caption{Configuration file for the example at the end of Section~\ref{ssec:elimination_sequences}.}
    \label{fig:config_file}
\end{figure}
The plots in this paper, as well as the data in the tables, can be generated by the Python script \texttt{additionals/scripts/generate\_plots.py}. The datasets and config files that are used to generate the results in this paper are all collected in \texttt{additionals/acda25}. Docker containers are supplied for convenience. Refer to the \texttt{README.md} for further instructions. 

\section{Conclusion and Future Work.}
\label{sec:conclusion}
%


%
This paper addresses the inherent parallelism induced by Jacobian chain products for programs composed of sequential differentiable subprograms. Specifically, we analyze how different bracketings of the chain product correspond to different precedence-constrained scheduling problems. We introduce a new DP algorithm that incorporates scheduling considerations directly into the bracketing optimization process. Despite the NP-hardness of the underlying scheduling problem, our DP approach serves as an effective heuristic. We validate its performance through statistical analysis, comparing the makespan of schedules produced by our algorithm against the global optimum obtained via a B\&B method.

The results demonstrate that our algorithm consistently generates near-optimal schedules, with the average performance gap decreasing as the number of available machines increases. Even in cases with limited machines, the schedules produced are within a practical range of the optimal makespan. This indicates that our method effectively exploits available parallelism without incurring significant computational overhead or deviating substantially from the optimal solution.

\paragraph{Case Studies.}

Given that the produced results are purely statistical, the logical next step would be to apply the scheduled Jacobian chaining to some real-world examples. We envision the following:

Given a tape-based AD tool that can interpret its tape in tangent (forward) and adjoint (reverse) mode, we split the recorded tape into chunks. These chunks, combined with the forward and reverse interpretation, act as the tangent and adjoint models of the elemental Jacobians. We can invoke our DP solver at runtime and interpret the tape in parallel via a task-based shared-memory framework like OpenMP. The main challenge here is the partitioning of the tape. For example, we may want to split the tape at positions where the number of nodes at the interfaces that connect the new DAGs is minimal to reduce the sizes of the elemental Jacobians.

\paragraph{Communication Costs.}

In parallel computing environments, especially in distributed-memory systems, communication delays between machines can impact overall performance. Expanding to scheduling algorithms that account for communication overhead, particularly when transferring large intermediate Jacobian matrices between machines, seems desirable.

\paragraph{Sparse Jacobian Computations.}

Many practical problems involve sparse Jacobian matrices. Incorporating sparsity into our DP and scheduling framework could lead to substantial reductions in computational cost and memory usage. We can build on existing work \cite{griewank_accumulating_2003} and incorporate Jacobian compression \cite{gebremedhin_what_2005, griewank_evaluating_2008} into the elimination steps, potentially altering the evaluation costs quite substantially. However, the scheduling aspect introduced in this paper is independent of the sparsity of the Jacobians, because the precedence constraints are unaffected by sparsity. Only the costs of the individual elimination steps might change. Thus, the statistical results presented should hold regardless of whether the considered Jacobians are dense or sparse.

\bibliographystyle{siamplain}
\bibliography{references}
\appendix

\newpage
\section{Appendix}
\label{sec:appendix}
%
%
\subsection{Worst-case Analysis.}
\label{ssec:appendix_worstcase_analysis}
We can derive a lower bound for the ratio between the optimal solution of the scheduled bracketing and the solution that our DP algorithm can produce. The following four observations are needed:
\begin{enumerate}
    \item {
        The space of all possible elimination sequences with $m-1$ machines is fully contained within the solution space with $m$ machines because not all $m$ machines must be used for the solution. This means that the DP solution for $m$ machines is never larger than the solution for $m-1$ machines and, therefore, also never larger than the serial solution.
        \begin{equation}
            \fma_{DP}^{(m)} \leq \fma_{DP}^{(m-1)} \leq \dots \leq \fma_{DP}^{(1)}
        \end{equation}
    }
    \item {
        A perfect scaling in the optimal solution with $m$ machines cannot exist because the last operation, whether it is a multiplication of two Jacobians or an evaluation of a tangent or adjoint model, is dependent on all previous operations.
        \begin{equation}
            \fma_{opt}^{(m)} > \frac{1}{m} \fma_{opt}^{(1)}
        \end{equation}
    }
    \item {
        The serial DP solution ($m=1$) is optimal \cite{naumann_matrix-free_2024}.
        \begin{equation}
             \fma_{opt}^{(1)} = \fma_{DP}^{(1)}
        \end{equation}
    }
    \item {
        The optimal solution with $m$ machines might not be able to use all of them due to the precedence constraints. If that is the case, the optimal solution will not improve with more machines. The number of \emph{useful} machines may, therefore, be limited by $m^*$ smaller than or equal to the length of the Jacobian chain.
        \begin{equation}
             \fma_{opt}^{(m)} \geq \fma_{opt}^{(m^*)} \qquad 1 < m^* \leq q
        \end{equation}
    }
\end{enumerate}
Equipped with these observations, we can derive the following lower bounds:
\paragraph{Case $m \leq m^*$.}
\begin{equation*}
    \mathcal{C}^{(m)} = \frac{\fma_{opt}^{(m)}}{\fma_{DP}^{(m)}} \geq \frac{\fma_{opt}^{(m)}}{\fma_{DP}^{(1)}} > \frac{\frac{1}{m} \fma_{opt}^{(1)}}{\fma_{DP}^{(1)}} = \frac{1}{m}
\end{equation*}
\paragraph{Case $m > m^*$.}
\begin{equation*}
    \mathcal{C}^{(m)} = \frac{\fma_{opt}^{(m)}}{\fma_{DP}^{(m)}} \geq \frac{\fma_{opt}^{(m)}}{\fma_{DP}^{(1)}} \geq \frac{\fma_{opt}^{(m^*)}}{\fma_{DP}^{(1)}} > \frac{\frac{1}{m^*} \fma_{opt}^{(1)}}{\fma_{DP}^{(1)}} = \frac{1}{m^*}
\end{equation*}
The final worst-case lower-bound is therefore
\begin{equation}
    \mathcal{C}^{(m)} > \mathcal{C}_{WC}^{(m)} = \frac{1}{\min(m, m^*)} \qquad 1 < m^* \leq q.
\end{equation}
%
%
\onecolumn
\subsection{\emph{Matrix-Free} \emph{Limited-Memory} Dense Jacobian Chain Bracketing Formulations.}
\label{ssec:appendix_mflmdjcb_formulations}
\begin{flushleft}
    \vfill
    \textbf{Dense Jacobian Chain Product Bracketing}
\end{flushleft}
\begin{equation} \label{eqn:djcpb_2}
    \fma_{j,i} = \begin{cases}
        |E_j| \cdot \min(n_j,m_j) & j=i \\[5pt]
        \min_{i \leq k < j} \left( \fma_{j,k+1}+\fma_{k,i} + m_j \cdot m_k \cdot n_i \right) & j>i \; .
    \end{cases}
\end{equation}
\begin{flushleft}
    \vfill
    \textbf{\emph{Matrix-Free} Dense Jacobian Chain Product Bracketing}
\end{flushleft}
\begin{equation} \label{eqn:mfdjcpb}
    \fma_{j,i} = \begin{cases}
        |E_j| \cdot \min(n_j,m_j) & j=i \\[10pt]
        \min_{i \leq k < j} \left( \min \left(
        \begin{aligned}
            &\fma_{j,k+1}+\fma_{k,i} + m_j \cdot m_k \cdot n_i \\
            &\fma_{j,k+1} + m_j \cdot \sum_{\nu=i}^k |E_\nu| \\
            &\fma_{k,i} + n_i \cdot \sum_{\nu=k+1}^j |E_\nu|
        \end{aligned}
        \right) \right) & j>i \; .
    \end{cases}
\end{equation}
\begin{flushleft}
    \vfill
    \textbf{\emph{Limited-Memory} Matrix-Free Dense Jacobian Chain Product Bracketing}
\end{flushleft}
\begin{equation} \label{eqn:lmmfdjcpb}
    \fma_{j,i} = \begin{cases}
        |E_j| \cdot \begin{cases}
            n_j & |E_j| > \overline{M} \\
            \min(n_j,m_j) & \text{otherwise}
        \end{cases} & j=i \\[15pt]
        \min_{i \leq k < j} \left( \min \left(
        \begin{aligned}
            &\fma_{j,k+1}+\fma_{k,i} + m_j \cdot m_k \cdot n_i \\
            &\fma_{j,k+1} + m_j \cdot \sum_{\nu=i}^k |E_\nu| \quad \text{if}~\sum_{\nu=i}^k |E_\nu| \leq \overline{M} \\
            &\fma_{k,i} + n_i \cdot \sum_{\nu=k+1}^j |E_\nu|
        \end{aligned}
        \right) \right) & j>i \; .
    \end{cases}
\end{equation}
\begin{flushleft}
    \vfill
\end{flushleft}
%
%
\subsection{\emph{Scheduled} Jacobian Chain Bracketing Formulations.}
\label{ssec:appendix_scheduled_formulations}
\begin{flushleft}
    \vfill
    \textbf{\emph{Scheduled} Dense Jacobian Chain Product Bracketing}
\end{flushleft}
\begin{equation} \label{eqn:djcpb_scheduled}
    \fma_{j,i}^{(t)} = \begin{cases}
        |E_j| \cdot \min(n_j,m_j) & j=i \\[5pt]
        \min_{i \leq k < j} \left( m_j \cdot m_k \cdot n_i + \min \left(
        \begin{aligned}
            &\fma_{j,k+1}^{(t)}+\fma_{k,i}^{(t)} \\
            &\min_{1 \leq t^{*} < t} \left( \max\left(\fma_{j,k+1}^{(t^{*})}, \fma_{k,i}^{(t - t^{*})}\right) \right)
        \end{aligned}
        \right) \right) & j>i \; .
    \end{cases}
\end{equation}
\begin{flushleft}
    \vfill
    \textbf{\emph{Scheduled} Matrix-Free Dense Jacobian Chain Product Bracketing}
\end{flushleft}
\begin{equation} \label{eqn:mfdjcpb_scheduled}
    \fma_{j,i}^{(t)} = \begin{cases}
        |E_j| \cdot \min(n_j,m_j) & j=i \\[10pt]
        \min_{i \leq k < j} \left( \min \left(
        \begin{aligned}
            &m_j \cdot m_k \cdot n_i + \min \left(
            \begin{aligned}
                &\fma_{j,k+1}^{(t)}+\fma_{k,i}^{(t)} \\
                &\min_{1 \leq t^{*} < t} \left( \max\left(\fma_{j,k+1}^{(t^{*})}, \fma_{k,i}^{(t - t^{*})}\right) \right)
            \end{aligned} \right) \\
            &\fma_{j,k+1}^{(t)} + m_j \cdot \sum_{\nu=i}^k |E_\nu| \\
            &\fma_{k,i}^{(t)} + n_i \cdot \sum_{\nu=k+1}^j |E_\nu|
        \end{aligned}
        \right) \right) & j>i \; .
    \end{cases}
\end{equation}
\begin{flushleft}
    \vfill
    \textbf{\emph{Scheduled} Limited-Memory Matrix-Free Dense Jacobian Chain Product Bracketing}
\end{flushleft}
\begin{equation} \label{eqn:lmmfdjcpb_scheduled}
    \fma_{j,i}^{(t)} = \begin{cases}
        |E_j| \cdot \begin{cases}
            n_j & |E_j| > \overline{M} \\
            \min(n_j,m_j) & \text{otherwise}
        \end{cases} & j=i \\[15pt]
        \min_{i \leq k < j} \left( \min \left(
        \begin{aligned}
            &m_j \cdot m_k \cdot n_i + \min \left(
            \begin{aligned}
                &\fma_{j,k+1}^{(t)}+\fma_{k,i}^{(t)} \\
                &\min_{1 \leq t^{*} < t} \left( \max\left(\fma_{j,k+1}^{(t^{*})}, \fma_{k,i}^{(t - t^{*})}\right) \right)
            \end{aligned} \right) \\
            &\fma_{j,k+1}^{(t)} + m_j \cdot \sum_{\nu=i}^k |E_\nu| \quad \text{if}~\sum_{\nu=i}^k |E_\nu| \leq \overline{M} \\
            &\fma_{k,i}^{(t)} + n_i \cdot \sum_{\nu=k+1}^j |E_\nu|
        \end{aligned}
        \right) \right) & j>i \; .
    \end{cases}
\end{equation}
\begin{flushleft}
    \vfill
\end{flushleft}
%
%
\twocolumn
\subsection{Statistical Results.} \hfill
\label{ssec:appendix_stats}

\begin{figure}[!h]
    \centering
    \input{figures/results_length_3}
    \caption{Ratio between $\fma_{opt}^{(m)}$ and $\fma_{DP}^{(m)}$ calculated for $10^4$ random chains of length $q=3$.}
    \label{fig:results_length_3}
\end{figure}

\begin{figure}[!h]
    \centering
    \input{figures/results_length_4}
    \caption{Ratio between $\fma_{opt}^{(m)}$ and $\fma_{DP}^{(m)}$ calculated for $10^4$ random chains of length $q=4$.}
    \label{fig:results_length_4}
\end{figure}

\begin{figure}[!h]
    \centering
    \input{figures/results_length_5}
    \caption{Ratio between $\fma_{opt}^{(m)}$ and $\fma_{DP}^{(m)}$ calculated for $10^4$ random chains of length $q=5$.}
    \label{fig:results_length_5}
\end{figure}

\newpage
\subsection*{} \hfill

\begin{figure}[!h]
    \centering
    \input{figures/results_length_6}
    \caption{Ratio between $\fma_{opt}^{(m)}$ and $\fma_{DP}^{(m)}$ calculated for $10^4$ random chains of length $q=6$.}
    \label{fig:results_length_6_app}
\end{figure}

\begin{figure}[!h]
    \centering
    \input{figures/results_length_7}
    \caption{Ratio between $\fma_{opt}^{(m)}$ and $\fma_{DP}^{(m)}$ calculated for $10^4$ random chains of length $q=7$.}
    \label{fig:results_length_7}
\end{figure}

\begin{figure}[!h]
    \centering
    \input{figures/results_length_8}
    \caption{Ratio between $\fma_{opt}^{(m)}$ and $\fma_{DP}^{(m)}$ calculated for $10^4$ random chains of length $q=8$.}
    \label{fig:results_length_8}
\end{figure}

\end{document}